\newtheorem{lemma}{\textbf{Lemma}}
\begin{document}
%
% paper title
% Titles are generally capitalized except for words such as a, an, and, as,
% at, but, by, for, in, nor, of, on, or, the, to and up, which are usually
% not capitalized unless they are the first or last word of the title.
% Linebreaks \\ can be used within to get better formatting as desired.
% Do not put math or special symbols in the title.
\title{Pilot-Aided Joint Time Synchronization and Channel Estimation for OTFS}
%
%
% author names and IEEE memberships
% note positions of commas and nonbreaking spaces ( ~ ) LaTeX will not break
% a structure at a ~ so this keeps an author's name from being broken across
% two lines.
% use \thanks{} to gain access to the first footnote area
% a separate \thanks must be used for each paragraph as LaTeX2e's \thanks
% was not built to handle multiple paragraphs
%
%Xianbin Cao,~\IEEEmembership{Senior Member,~IEEE},
\author{Jiazheng Sun, Peng Yang,~\IEEEmembership{Member,~IEEE}, 
        Xianbin Cao,~\IEEEmembership{Senior Member,~IEEE},\\ 
        Zehui Xiong,~\IEEEmembership{Senior Member,~IEEE},
        Haijun Zhang,~\IEEEmembership{Fellow,~IEEE},
        and Tony Q. S. Quek,~\IEEEmembership{Fellow,~IEEE}% <-this % stops a space
\thanks{
%Manuscript received 6 March 2024; revised 5 June 2024; accepted 8 July 2024.
%This work was supported in part by the National Natural Science Foundation of China under Grant 61827901 and 62201024, in part by the Fundamental Research Funds for the Central Universities, and in part by Cultivation Program for Young Science and Technology Talents. Recommended for acceptance by Dr. Chao Wang.
%\emph{%(\IEEEauthorrefmark{1}These authors contributed equally to this work.)
%(Corresponding authors: Peng Yang; Xianbin Cao.)}

J. Sun, P. Yang, and X. Cao are with School of Electronic and Information Engineering, Beihang University, Beijing 100191, China.

%P. Yang is with School of Electronic and Information Engineering, Beihang University, Beijing, China, also with Department of Strategic and Cross-cutting Frontier Research, Pengcheng Laboratory, Shenzhen, Guangdong, China, also with the Key Laboratory of Advanced Technology of Near Space Information System (Beihang University), Ministry of Industry and Information Technology of China. %peng_yang@buaa.edu.cn, xbcao@buaa.edu.cn

Z. Xiong and T. Quek are with the Information Systems Technology and Design Pillar, Singapore University of Technology and Design, Singapore 487372. %zehui_xiong@sutd.edu.sg, tonyquek@sutd.edu.sg

H. Zhang is with the Beijing Engineering and Technology Research Center for Convergence Networks and Ubiquitous Services, University of Science and Technology Beijing, Beijing 100083, China. %haijunzhang@ieee.org 
%Y. Gao is with the School of Computer Science, Fudan University, Shanghai 200433, China. 
}
}

\maketitle

% As a general rule, do not put math, special symbols or citations
% in the abstract or keywords.
\begin{abstract}
This letter proposes a pilot-aided joint time synchronization and channel estimation (JTSCE) algorithm for orthogonal time frequency space (OTFS) systems. Unlike existing algorithms, JTSCE employs a maximum length sequence (MLS) rather than an isolated signal as the pilot. Distinctively, JTSCE explores MLS’s autocorrelation properties to estimate timing offset and channel delay taps. After obtaining delay taps, closed-form expressions of Doppler and channel gain for each propagation path are derived. Simulation results indicate that, compared to its counterpart, JTSCE achieves better bit error rate performance, close to that with perfect time synchronization and channel state information.
\end{abstract}

% Note that keywords are not normally used for peerreview papers.
\begin{IEEEkeywords}
OTFS, synchronization, timing offset estimation, channel estimation
\end{IEEEkeywords}

% For peer review papers, you can put extra information on the cover
% page as needed:
% \ifCLASSOPTIONpeerreview
% \begin{center} \bfseries EDICS Category: 3-BBND \end{center}
% \fi
%
% For peerreview papers, this IEEEtran command inserts a page break and
% creates the second title. It will be ignored for other modes.
\IEEEpeerreviewmaketitle

\section{Introduction}
% The very first letter is a 2 line initial drop letter followed
% by the rest of the first word in caps.
%
% form to use if the first word consists of a single letter:
% \IEEEPARstart{A}{demo} file is ....
%
% form to use if you need the single drop letter followed by
% normal text (unknown if ever used by IEEE):
% \IEEEPARstart{A}{}demo file is ....
%
% Some journals put the first two words in caps:
% \IEEEPARstart{T}{his demo} file is ....
%
% Here we have the typical use of a "T" for an initial drop letter
% and "HIS" in caps to complete the first word.

% You must have at least 2 lines in the paragraph with the drop letter
% (should never be an issue)

\IEEEPARstart{W}{ith} {the rapid development of next-generation wireless networks, the application scenarios of wireless communications are characterized by high carrier frequencies and high mobility \cite{DBLP:journals/wcl/AbidTK24}. In such high Doppler shift scenarios, the rapidly time-varying multipath channels, i.e., doubly-selective channels, degrade the performance of widely adopted modulation schemes like orthogonal frequency division multiplexing (OFDM). This is because this type of modulation schemes is based on the assumption of time-invariant channels. To address this issue, researchers have proposed a promising orthogonal time frequency space (OTFS) modulation scheme based on the assumption of doubly-selective channels \cite{Singh2024AutoencoderBE}. OTFS characterizes channels and modulates information symbols in the two-dimensional delay-Doppler (DD) domain. Therefore, this new modulation scheme enables accurate estimation of both multipath delay and Doppler shift through channel estimation, demonstrating strong robustness in high mobility communication scenarios.}

%\IEEEPARstart{O}{rthogonal} Frequency Division Multiplexing (OFDM) that is based on the assumption of time-invariant multipath channels \textcolor{red}{is a key modulation scheme in the forth and fifth generation of mobile communications}. OFDM exhibits strong robustness against frequency-selective channels. Nevertheless, the performance of OFDM modulation deteriorates in high mobility communication scenarios. This is because high-speed mobile communication channels exhibit significant Doppler frequency shifts, which are not accounted for in OFDM's channel assumptions. 

%To address this issue, researchers have proposed a new generation of modulation technology called Orthogonal Time Frequency Space (OTFS). Unlike OFDM, OTFS establishes a delay-Doppler (DD) domain channel model and places information symbols on a DD grid, transmitting them after a series of transformations. This new modulation scheme assumes time-varying Doppler and multipath channels, demonstrating strong robustness in high mobility communication scenarios. It enables accurate estimation of both multipath delay and Doppler shift through channel estimation.
%In any practical communication system \textcolor{red}{including OTFS systems}, synchronization and channel estimation are crucial techniques. This paper investigates the issues of \textcolor{blue}{signal} synchronization and channel estimation in OTFS systems. 

In any practical communication system including OTFS systems, synchronization and channel estimation are crucial techniques. This paper investigates the issues of signal synchronization and channel estimation in OTFS systems.

There has been extensive research on OTFS channel estimation \cite{raviteja2019embedded}\cite{yuan2021data}\cite{hashimoto2021channel}\cite{wei2022off}, with a representative work being the embedded pilot-aided algorithm (EPA) \cite{raviteja2019embedded}. This method performed channel estimation in the DD domain and addresses the fractional Doppler issue by arranging additional guard symbols. Another significant work \cite{yuan2021data} superimposed pilot and data symbols. {It iteratively estimated channels and detected data symbols}. 
However, these methods {do not investigate the} 
%assume that the signal is time-synchronized and do not address the 
synchronization problem. 
{Regarding synchronization}, the research on OTFS system synchronization is under-studied currently \cite{bayat2022time}\cite{li2023downlink}. The embedded pilot signal in \cite{raviteja2019embedded} exhibits constant amplitude with equal phase difference in the delay-time (DT) domain. Utilizing this characteristic, the authors in \cite{bayat2022time} achieved time synchronization in the DT domain. {Particularly, they utilized} an absolute timing metrics for TO estimation, which could identify the strongest path. Nevertheless, when the first path is not the strongest, this method fails to locate the first multipath component, resulting in TO estimation errors. Besides, due to the poor autocorrelation properties of the constant amplitude sequence in the DT domain, this method requires a two-step process for TO estimation in both delay and time dimensions.

To address the aforementioned issues, this letter {utilizes maximum length sequence (MLS) as the pilot signal and} proposes a joint time synchronization and channel estimation (JTSCE) algorithm for OTFS systems. 
JTSCE provides two primary advantages. In terms of synchronization, it eliminates the need for time-dimension TO estimation and improves TO estimation accuracy under specific channel conditions. In terms of channel estimation, particularly for Doppler estimation, it employs an off-grid estimation scheme, which enhances estimation accuracy in fractional Doppler cases. 

The main contributions of this letter are summarized as follows: 1) We utilize MLS as the pilot signal and correlate the received signal with the local MLS in the DT domain to achieve time synchronization without the need for time-dimension TO estimation. 2) Using the same pilot signal, we accomplish channel estimation and derive closed-form solutions for both channel gain and off-grid Doppler estimation based on 
%\textcolor{red}{the results of time synchronization.} 
{the correlation results obtained from time synchronization.}
%correlation function.
%\textcolor{blue}{The main contributions of this letter are summarized as follows: 1) ... 
%2) the introduction of a joint synchronization and channel estimation method that utilizes a single pilot structure to achieve both TO and channel estimation.} 
3) %To validate the performance of this method, 
We conduct extensive simulations to evaluate the accuracy of TO and channel estimation of the proposed algorithm. Simulation results verify that the proposed algorithm achieves high synchronization accuracy in doubly-selective channels. Further, in terms of channel estimation, compared to benchmark algorithms with the same embedded pilot signal power level, the proposed algorithm achieves better bit error rate (BER) performance, close to that with perfect time synchronization and channel state information (CSI).

\section{OTFS principles}
\label{sec:principles}

By performing an inverse symplectic finite Fourier transform (ISFFT) on the DD domain signal followed by a Heisenberg transform, the OTFS signal can be obtained. 
{Further}, when both the transmission pulse and the reception pulse are rectangular windows, the aforementioned signal generation process can be simplified. 
Specifically, an inverse discrete Fourier transform (IDFT) can be firstly applied to the DD domain signal along the Doppler dimension (each row of the DD grid), resulting in the DT domain signal. Secondly, the DT domain signal is serialized column-wise to obtain the OTFS signal. 

Mathematically, let ${X_{{\rm{DD}}}}\left[ {l,k} \right]$ be a DD domain signal composed of quadrature amplitude modulation (QAM) data symbols, with $N$ delay and $M$ Doppler bins, where $l=0,1,\cdots,M-1$ and $k=0,1,\cdots,N-1$ are the delay and Doppler indices, respectively. The DT domain signal can be represented as ${X_{{\rm{DT}}}}\left[ {l,n} \right] = \frac{1}{{\sqrt N }}\sum\nolimits_{k' = 0}^{N - 1} {{X_{{\rm{DD}}}}\left[ {l,k'} \right]{{\rm{e}}^{{\rm{j}}\frac{{2\pi }}{N}k'n}}} $, where $n = 0,1, \cdots N - 1$ is the time index. To obtain the OTFS signal, the DT domain signal needs to be serialized column-wise, that is, $s\left[ {n'} \right] = {X_{{\rm{DT}}}}\left[ {l,n} \right]$ with $n' = l + nM$.
%taking the values $n' = 0,1, \cdots ,MN - 1$ . 
Before transmitting the signal, {a guard interval (GI) is added to eliminate} inter-symbol interference (ISI) between OTFS symbols. 
%to the OTFS signal. 
This letter adopts reduced cyclic prefix (RCP) as the GI. That is, a cyclic prefix (CP) of length ${L_{{\rm{RCP}}}}$ is added to the beginning of each OTFS block, containing $MN$ samples. The length ${L_{{\rm{RCP}}}}$ should be greater than the delay spread of the channel.

For a doubly-selective channel, we denote its response to an impulse with delay 
$\tau$ and Doppler shift $\nu$ as $h\left( {\tau ,\nu } \right)$. The received signal $r\left( t \right)$
%resulting from the transmitted signal $s\left( t \right)$ through this channel 
can then be expressed as %\cite{hadani2017orthogonal}
\begin{equation}\label{eq:continuous_io_time}
    r\left(t\right)=\iint h\left(\tau,\nu\right)s\left(t-\tau\right)\mathrm{e}^{\mathrm{j}2\pi\nu\left(t-\tau\right)}\mathrm{d}\nu\mathrm{d}\tau.
\end{equation}

Sampling $r(t)$ with delay and Doppler resolutions of $\frac{T}{M}$ and$\frac{{\Delta f}}{N}$, respectively, i.e., $t = \frac{{n'T}}{M}$, $\tau  = \frac{{lT}}{M}$ and $\nu  = \frac{{k\Delta f}}{N}$, we can obtain a discrete form of the time domain input-output relationship, i.e.,
\begin{equation}\label{eq:discrete_io_time}
r\left[ {n'} \right] = \sum\limits_{i = 1}^P {{h_i}s\left[ {n' - {l_i}} \right]{{\rm{e}}^{{\rm{j}}\frac{{2{\rm{\pi }}}}{{MN}}{k_i}\left( {n' - {l_i}} \right)}}},    
\end{equation}
where $P$ is the number of paths of the channel, $l_i$, $k_i$, and $h_i=h\left[l_i,k_i\right]$ are the delay tap, Doppler tap, and {channel} gain of the $i$-th path, respectively.

By performing serial-to-parallel conversion on {$s\left[n'-l_i\right]$ and $r\left[n'\right]$ in (\ref{eq:discrete_io_time})}
%\textcolor{blue}{the input and output signals of (\ref{eq:discrete_io_time})}
, we can obtain the input-output relationship in the DT domain. Let $n' = nM + l$ (assuming ${l_i} < l$) and $r\left[ {n'} \right] = {Y_{{\rm{DT}}}}\left[ {l,n} \right]$, the relationship can be expressed as
\begin{equation}\label{eq:discrete_io_DT}
    {Y_{{\rm{DT}}}}\left[ {l,n} \right] = \sum\limits_{i = 1}^P {{h_i}{X_{{\rm{DT}}}}\left[ {l - {l_i},n} \right]{{\rm{e}}^{{\rm{j}}\frac{{2{\rm{\pi }}}}{{MN}}{k_i}\left( {l - {l_i}} \right)}}{{\rm{e}}^{{\rm{j}}\frac{{2{\rm{\pi }}}}{N}{k_i}n}}}.
\end{equation}

The above derivation assumes no TO. Therefore, to recover the transmitted signal $s\left[ {n'} \right]$, %\textcolor{cyan}{one should first achieve time synchronization and then estimate the channel's delay taps ${l_i}$, Doppler taps ${k_i}$, and channel gain ${h_i}$, respectively.}
{one should first achieve time synchronization and then %\textcolor{cyan}{estimate CSI $\mathcal{H} = \left\{ {\left( {{l_i},{k_i},{h_i}} \right),i = 1,2, \cdots ,P} \right\}$}.}
{obtain channel parameters set $\mathcal{H} = \left\{ {\left( {{l_i},{k_i},{h_i}} \right);i = 1,2, \cdots ,P} \right\}$ by channel estimation.}

\section{Proposed joint time synchronization and channel estimation algorithm}
In this section, we propose the JTSCE algorithm for OTFS systems. {OTFS systems can treat carrier frequency offset (CFO) caused by Doppler effects and local oscillator mismatch as a part of channel response \cite{hong2022delay}. Therefore, during the synchronization phase, we only consider TO estimation. The task of frequency estimation can be handled during channel estimation.}
%\textcolor{cyan}{Regarding synchronization, unlike OFDM systems, which need to consider both timing offset (TO) and carrier frequency offset (CFO), OTFS systems can treat the CFO caused by Doppler effects and local oscillator mismatch as part of the channel response \cite{hong2022delay}.}
{Next, we will estimate TO, delay, Doppler, and channel gain in two steps to achieve JTSCE.}

%\textcolor{magenta}{Next, we will perform joint synchronization and channel estimation by estimating TO, delay, Doppler and channel gain in two steps.}

\subsection{TO \& delay estimation}
We first estimate the TO and delay.
%\textcolor{cyan}{The proposed method deploys MLS in the DT domain to capture all paths of the channel, thereby eliminating the need for a time-dimension TO estimation step. It can estimate all delay taps in the channel. Additionally, it maintains high TO estimation accuracy even when the first path of the channel is not the strongest path.}
{The proposed algorithm deploys MLS in the DT domain, which eliminates the need for time-dimension TO estimation, and can estimate all delay taps of the channel. Additionally, it maintains high TO estimation accuracy even if the 1$^{\rm st}$ path of the channel is not strongest.}

The TO estimation method proposed in \cite{bayat2022time} requires {an additional step} for time-dimension TO estimation. This is largely because the constant amplitude linear phase structure explored in \cite{bayat2022time} has poor autocorrelation properties, making the timing metric insensitive to time-dimension TO. 

To tackle this issue, this letter leverages MLS with good autocorrelation properties in the DT domain for TO estimation. An MLS is a pseudo-random sequence of length ${2^p} - 1$, where $p$ is a positive integer. Typically, $N$ is chosen as ${2^p}$. Therefore, we generate a bipolar MLS ${x_{{\rm{MLS}}}}\left[ n \right]$ of length $N-1$ with total power of ${P_{{\rm{MLS}}}}$ (sequence values are $ \pm \sqrt {\frac{{{P_{{\rm{MLS}}}}}}{{N - 1}}} $), where $n = 0,1, \cdots ,N - 2$. A zero is appended to the end of ${x_{{\rm{MLS}}}}\left[ n \right]$ to obtain $\tilde{x}_{\mathrm{MLS}}\left[n\right]=\begin{cases}x_{\mathrm{MLS}}\left[n\right],&n=0,1,\cdots,N-2,\\0,&n=N-1\end{cases}$. 

Next, a $N$-point discrete Fourier transform (DFT) is performed on ${\tilde x_{{\rm{MLS}}}}\left[ n \right]$ to obatain ${\tilde X_{{\rm{MLS}}}}\left[ k \right] = {\rm{DFT}}\left\{ {{{\tilde x}_{{\rm{MLS}}}}\left[ n \right]} \right\}$. ${\tilde X_{{\rm{MLS}}}}\left[ k \right]$ is then placed in the ${l_{{\rm{MLS}}}}$-th row of the DD domain grid, as shown in Fig. \ref{fig:pilot_pattern}. To prevent interference between data symbols and the MLS, guard symbols are deployed between them. The guard symbols are set to $0$, with a width $2L + 1$ where $L$ is no less than %\textcolor{cyan}{the channel's delay spread} 
%\textcolor{magenta}
{the delay tap corresponding to the channel's delay spread} \cite{raviteja2019embedded}. 

\iffalse
The obtained ${X_{{\rm{DD}}}}\left[ {l,k} \right]$ and ${X_{{\rm{DT}}}}\left[ {l,n} \right]$ are given by 
\begin{equation}\label{eq:xdd}
X_{\text{DD}}\left[l,k\right]=\begin{cases}\tilde{X}_\text{MLS}\left[k\right],&l=l_\text{MLS},\\0,&l_\text{MLS}-L\leq l\leq l_\text{MLS}+L,\\&0\leq k\leq N-1,\\\text{Data,}&\text{otherwise,}\end{cases}
\end{equation}
\begin{equation}\label{eq:xdt}
X_{\text{DT}}\left[l,n\right]=\begin{cases}\tilde{x}_\text{MLS}\left[n\right],&l=l_\text{MLS},\\0,&l_\text{MLS}-L\leq l\leq l_\text{MLS}+L,\\&0\leq n\leq N-1,\\\text{Data,}&\text{otherwise.}\end{cases}
\end{equation}
\fi
\begin{figure}[!t]
    \centering
    \includegraphics[width=0.45\textwidth]{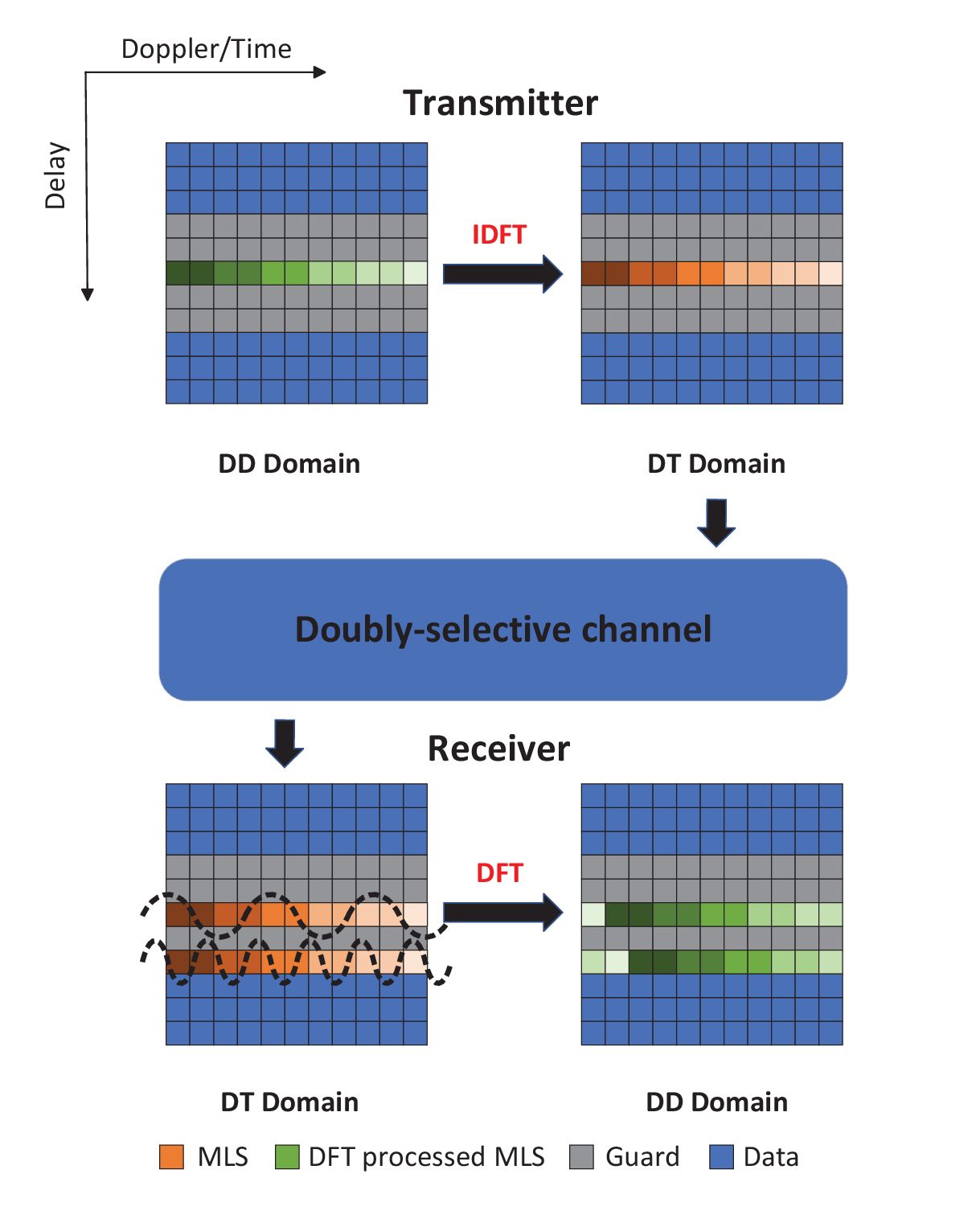}
    \caption{Pilot patterns {for both} transmitter and receiver.}
    \label{fig:pilot_pattern}
\end{figure}

Based on the input-output relationship derived in section \ref{sec:principles}, we can derive the {signal} expression for the received MLS at the receiver, as illustrated in Fig. \ref{fig:pilot_pattern}. According to (\ref{eq:discrete_io_DT}), the MLS in the received DT domain signal will be spread across different rows with delay ${l_{{\rm{MLS}}}} + {l_i}$. Further, the mathematical expression for the sequence in the $\left({l_{{\rm{MLS}}}} + {l_i}\right)$-th row of the DT grid is given by $y_{{\rm{MLS}}}^{\left( {{l_i}} \right)}\left[ n \right] = {h_i}{{\rm{e}}^{{\rm{j}}\frac{{2{\rm{\pi }}}}{{MN}}{k_i}{l_{{\rm{MLS}}}}}} {\tilde x_{{\rm{MLS}}}}\left[ n \right]{{\rm{e}}^{{\rm{j}}\frac{{2{\rm{\pi }}}}{N}{k_i}n}}$. 

It can be observed that the non-zero rows in the received DT grid are in the form of MLS multiplied by a complex exponential sequence. Therefore, by utilizing the good autocorrelation properties of MLS to locate $y_{{\rm{MLS}}}^{\left( {{l_i}} \right)}\left[ n \right]$ and obtain ${l_i}$, we can complete the estimation of TO and delay.

To locate $y_{{\rm{MLS}}}^{\left( {{l_i}} \right)}\left[ n \right]$ in the DT domain, we first extract a portion of the received signal $r\left[ {n'} \right]$ that can serve as a row in the DT domain grid. 
{We denote the row} as ${Y_{\tilde n}}\left[ n \right] = r\left[ {\tilde n + nM} \right]$, 
%as shown in Fig. \ref{fig:label2}. We then 
and use the local MLS to correlate with the corresponding part of ${Y_{\tilde n}}\left[ n \right]$. This process can be mathematically represented as ${q_{\tilde n}}\left[ n \right] = {Y_{\tilde n}}\left[ n \right] {\tilde x_{{\rm{MLS}}}}\left[ n \right]$. 

When locating $y_{{\rm{MLS}}}^{\left( {{l_i}} \right)}\left[ n \right]$, we have ${Y_{\tilde n}}\left[ n \right] = y_{{\rm{MLS}}}^{\left( {{l_i}} \right)}\left[ n \right]$ and ${q_{\tilde n}}\left[ n \right] = y_{{\rm{MLS}}}^{\left( {{l_i}} \right)}\left[ n \right] {\tilde x_{{\rm{MLS}}}}\left[ n \right] = {h_i}{{\rm{e}}^{{\rm{j}}\frac{{2{\rm{\pi }}}}{{MN}}{k_i}{l_{{\rm{MLS}}}}}} \tilde x_{{\rm{MLS}}}^2\left[ n \right]{{\rm{e}}^{{\rm{j}}\frac{{2{\rm{\pi }}}}{N}{k_i}n}}$. For the bipolar MLS ${\tilde x_{{\rm{MLS}}}}\left[ n \right]$, we have $\tilde{x}_{\mathrm{MLS}}^{2}\left[n\right]=\begin{cases}\frac{P_{\mathrm{MLS}}}{N-1},&n=0,1,\cdots,N-2,\\0,&n=N-1,\end{cases}$ and $q_{\tilde{n}}\left[n\right]=\begin{cases}\frac{P_{\mathrm{MLS}}}{N-1} h_i\mathrm{e}^{\mathrm{j}\frac{2\pi}{MN}k_il_{\mathrm{MLS}}}\mathrm{e}^{\mathrm{j}\frac{2\pi}{N}k_in},&n=0,1,\cdots,N-2,\\0,&n=N-1\end{cases}$.
Then, we perform an $N$-point DFT on ${q_{\tilde n}}\left[ n \right]$ to yield ${Q_{\tilde n}}\left[ k \right]$.
Since ${q_{\tilde n}}\left[ n \right]$ is a complex exponential sequence, the magnitude $\left| {{Q_{\tilde n}}\left[ k \right]} \right|$ will exhibit a distinct peak. When ${k_i}$ is an integer, the peak occurs at $k = {k_i}$, and its value is $\left| {{Q_{\tilde n}}\left[ k \right]} \right| = \frac{{\left| {{h_i}} \right|}}{{\sqrt N }}{P_{{\rm{MLS}}}}$. According to this principle, the proposed algorithm can detect all paths in the channel.
% \begin{figure}[!t]
%     \centering
%     \includegraphics[width=0.3\textwidth]{绘图3.pdf}
%     \caption{The process of \textcolor{cyan}{locating $y_{{\rm{MLS}}}^{\left( {{l_i}} \right)}\left[ n \right]$}\textcolor{magenta}{obtaining $Y_{\tilde n} \left[n\right]$}.\\
%     \textcolor{magenta}{WANT TO DELETE this fig}}
%     \label{fig:label2}
% \end{figure}

Next, we analyze the case {of} $y_{{\rm{MLS}}}^{\left( {{l_i}} \right)}\left[ n \right]$ not being located {accurately}, i.e., there is a TO. 
%According to \cite{bayat2022time}, 
{In this case,} we decompose TO $\theta$ as $\theta  = {\theta _{\rm{d}}} + M{\theta _{\rm{t}}}$, where ${\theta _{\rm{d}}}$ and ${\theta _{\rm{t}}}$ represent delay and time dimensions TO, respectively. When there is a delay dimension TO, i.e., ${\theta _{\rm{d}}} \ne 0$, ${Y_{\tilde n}}\left[ n \right]$ does not correlate with ${\tilde x_{{\rm{MLS}}}}\left[ n \right]$. As a result, ${Q_{\tilde n}}\left[ k \right]$ will not exhibit a distinct peak. For the case {that} only the time-dimension TO exists, i.e., ${\theta _{\rm{t}}} \ne 0$ and ${\theta _{\rm{d}}} = 0$, the following lemma explains why the proposed algorithm does not require additional time-dimension TO estimation.

%\begin{figure}[h]
    %\centering
    %\includegraphics[width=0.3\textwidth]{绘图2.eps}
    %\caption{Caption3}
    %\label{fig:label3}
%\end{figure}

\begin{lemma}\label{lemma:1}
    \rm{When ${\theta _{\rm{t}}} \ne 0$ and ${\theta _{\rm{d}}} = 0$, $\left| {{Q_{\tilde n}}\left[ k \right]} \right|$ will not exhibit a distinct peak, i.e., 
    \begin{equation}
        \mathop {\max }\limits_{k = 0,1, \cdots N - 1} \left\{ {\left| {{Q_{\tilde n}}\left[ k \right]} \right|} \right\} \ll \frac{{\left| {{h_i}} \right|}}{{\sqrt N }}{P_{{\rm{MLS}}}}.
    \end{equation}
    }
\end{lemma}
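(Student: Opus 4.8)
The plan is to evaluate $q_{\tilde n}[n]$ explicitly in the regime $\theta_{\mathrm d}=0$, $\theta_{\mathrm t}\neq 0$ and to show that, in sharp contrast with the synchronized case where $q_{\tilde n}[n]$ collapses to a constant times a single complex exponential (so that $|Q_{\tilde n}[k]|$ puts all of its energy at $k=k_i$), a nonzero $\theta_{\mathrm t}$ turns $q_{\tilde n}[n]$ into a full-length MLS modulated by a complex exponential, whose energy the $N$-point DFT is forced to spread over all $N$ bins. First I would pin down the effect of a pure time-dimension TO on the extracted row: since an RCP longer than the delay spread is prepended, the received block is effectively $MN$-periodic, so at a delay-aligned but time-misaligned candidate index the extracted row is the cyclic time shift $Y_{\tilde n}[n]=y_{\mathrm{MLS}}^{(l_i)}\!\big[(n+\theta_{\mathrm t})\bmod N\big]$. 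Substituting the known form of $y_{\mathrm{MLS}}^{(l_i)}[\cdot]$ and multiplying by the local replica $\tilde x_{\mathrm{MLS}}[n]$ gives
\[
q_{\tilde n}[n]=h_i\,\mathrm{e}^{\mathrm{j}\frac{2\pi}{MN}k_i l_{\mathrm{MLS}}}\,\mathrm{e}^{\mathrm{j}\frac{2\pi}{N}k_i(n+\theta_{\mathrm t})}\;\tilde x_{\mathrm{MLS}}\!\big[(n+\theta_{\mathrm t})\bmod N\big]\,\tilde x_{\mathrm{MLS}}[n].
\]

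The key structural step is to recognize the product $\tilde x_{\mathrm{MLS}}[(n+\theta_{\mathrm t})\bmod N]\,\tilde x_{\mathrm{MLS}}[n]$. By the shift-and-add property of $m$-sequences, in bipolar form the pointwise product of an MLS with a nontrivial cyclic shift of itself is itself a cyclic shift of the same MLS; since $\theta_{\mathrm t}\neq 0$ this product equals $\tfrac{P_{\mathrm{MLS}}}{N-1}$ times a \emph{nonconstant} shifted MLS, up to the discrepancy caused by the appended zero and by the period-$N$ versus period-$(N-1)$ wrap. Hence $q_{\tilde n}[n]$ is a constant of modulus $\tfrac{P_{\mathrm{MLS}}|h_i|}{N-1}$ times a length-$N$ (near-)MLS, modulated by $\mathrm{e}^{\mathrm{j}\frac{2\pi}{N}k_i n}$; this modulation only cyclically shifts the spectrum by $k_i$ and leaves every magnitude $|Q_{\tilde n}[k]|$ unchanged.

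I would then combine Parseval's identity with the flat power spectrum of an MLS. Parseval (for the unitary DFT used here) gives $\sum_k |Q_{\tilde n}[k]|^2=\sum_n |q_{\tilde n}[n]|^2=(N-2)\big(\tfrac{P_{\mathrm{MLS}}|h_i|}{N-1}\big)^2\approx \tfrac{P_{\mathrm{MLS}}^2|h_i|^2}{N}$, because $q_{\tilde n}[n]$ has constant modulus on all but $O(1)$ of its $N$ samples. The flat-spectrum (equivalently, thumbtack-autocorrelation) property of an MLS forces $|Q_{\tilde n}[k]|^2$ to be essentially constant over $k$, the sole exception being a single DC-type bin that is smaller still, so the total energy is shared almost evenly among the $N$ bins. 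This yields $\max_k |Q_{\tilde n}[k]|\approx \tfrac{P_{\mathrm{MLS}}|h_i|}{N}=\tfrac{1}{\sqrt N}\cdot\tfrac{|h_i|}{\sqrt N}P_{\mathrm{MLS}}$, i.e.\ it falls short of the synchronized peak $\tfrac{|h_i|}{\sqrt N}P_{\mathrm{MLS}}$ by the correlation gain $\sqrt N$, which is precisely the asserted $\ll$.

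I expect the main obstacle to be turning the ``(near-)MLS'' steps into exact statements: the shift-and-add identity and the flat-spectrum property hold rigorously for the period-$(N-1)$ $m$-sequence under mod-$(N-1)$ shifts, whereas JTSCE operates on the zero-padded length-$N$ sequence under mod-$N$ shifts. One therefore has to control how this mismatch perturbs $q_{\tilde n}[n]$ and hence each DFT bin, and, if a fully quantitative bound is desired, bound the (aperiodic) correlation sidelobes of the MLS uniformly over shifts; this is the delicate part, while the order-of-magnitude conclusion ($\max_k |Q_{\tilde n}[k]|\sim P_{\mathrm{MLS}}|h_i|/N \ll P_{\mathrm{MLS}}|h_i|/\sqrt N$) is robust to it.
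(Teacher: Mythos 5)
Your proposal is sound in its main thrust but takes a genuinely different route from the paper. The paper models the misaligned row as a linear shift whose first $\theta_{\rm t}$ samples are uncorrelated interference $y_{\rm TO}[n]$ (dropped for $\theta_{\rm t}\ll N$), relates the remaining partial sum $\sum_{n=\theta_{\rm t}}^{N-2}x_{\rm MLS}[n]x_{\rm MLS}[n-\theta_{\rm t}]$ to the full cyclic autocorrelation $-P_{\rm MLS}/(N-1)$, and thereby bounds the single bin $\left|Q_{\tilde n}[k_i]\right|\le \frac{|h_i|}{\sqrt N}\frac{(1+\theta_{\rm t})P_{\rm MLS}}{N-1}$. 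You instead model the row as a cyclic shift, invoke the shift-and-add property to identify $\tilde x_{\rm MLS}[(n+\theta_{\rm t})\bmod N]\,\tilde x_{\rm MLS}[n]$ as (essentially) another shifted MLS, and use Parseval plus the flat MLS power spectrum to conclude $\max_k|Q_{\tilde n}[k]|\approx P_{\rm MLS}|h_i|/N$. Your argument is in one respect stronger: it controls \emph{all} $N$ bins, whereas the paper's explicit computation only bounds the bin $k=k_i$ (which, being the DC bin of the product sequence, is in fact even smaller, of order $P_{\rm MLS}|h_i|\theta_{\rm t}/N^{3/2}$); the lemma as stated concerns the maximum over $k$, so your flat-spectrum step actually fills something the paper leaves implicit. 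The price is exactly the caveat you flag: the shift-and-add identity and the flat spectrum hold for the period-$(N-1)$ sequence under mod-$(N-1)$ shifts, so the appended zero and the mod-$N$ wrap introduce $O(\theta_{\rm t})$ perturbed samples that must be bounded, which is the same order of slack the paper absorbs into its $(1+\theta_{\rm t})P_{\rm MLS}/(N-1)$ term. The two modelings of the out-of-range samples (cyclic wrap versus interference prefix) also differ — the paper's is the more physically faithful one since a time-dimension offset pulls in samples from adjacent rows rather than wrapping the same row — but for $\theta_{\rm t}\ll N$ both yield the same $\ll$ conclusion.
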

\begin{proof}
    Please refer to the Appendix. %in technical report \cite{Sun2024Pilot}.
    %Proof is provided in the Appendix.
\end{proof}
The above lemma tells that, due to the good autocorrelation properties of MLS, even if ${\theta _{\rm{t}}}$ is very small, making ${q_{\tilde n}}\left[ n \right]$ and $y_{{\rm{MLS}}}^{\left( {{l_i}} \right)}\left[ n \right]$ nearly identical except for a slight shift, $\left| {{Q_{\tilde n}}\left[ k \right]} \right|$ will not exhibit a distinct peak. Therefore, {the proposed algorithm} does not require additional time-dimension TO estimation. 

Based on the above analysis, we can estimate the delay taps and TO by performing normalized threshold detection on the maximum magnitude of DFT of ${q_{\tilde n}}\left[ n \right]$. {The detailed steps are summarized in the latter Algorithm \ref{alg:joint}.}

{We then discuss how to determine} 
the detection threshold $\mathcal{T}$. The value of $\mathcal{T}$ is determined based on the ``non-peak" condition. In the presence of TO, ${Y_{\tilde n}}\left[ n \right]$ is usually data symbols with $E\left\{ {{Y_{\tilde n}}\left[ n \right]} \right\} = 0$. Therefore, $E\left\{ {{q_{\tilde n}}\left[ n \right]} \right\} = E\left\{ {{Y_{\tilde n}}\left[ n \right]} \right\} {\tilde x_{{\rm{MLS}}}}\left[ n \right] = 0$ and ${Q_{\tilde n}}\left[ k \right] = \sum\nolimits_{k = 0}^{N - 1} {{q_{\tilde n}}\left[ n \right]{{\rm{e}}^{ - {\rm{j}}\frac{{2{\rm{\pi }}nk}}{N}}}} $ with $E\left\{ {{Q_{\tilde n}}\left[ k \right]} \right\} = 0$. According to the central limit theorem, ${Q_{\tilde n}}\left[ k \right]$ approximately follows a complex Gaussian distribution with zero mean and ${\sigma ^2}$ variance. $\left| {{Q_{\tilde n}}\left[ k \right]} \right|$ follows a Rayleigh distribution with $\sqrt {\frac{\pi }{2}} \sigma $ mean and $\frac{{4 - \pi }}{2}{\sigma ^2}$ variance. Besides, $\sum\nolimits_{k = 0}^{N - 1} {\left| {{Q_{\tilde n}}\left[ k \right]} \right|}$ follows a complex Gaussian distribution with $N\sqrt {\frac{\pi }{2}} \sigma $ mean and $N\frac{{4 - \pi }}{2}{\sigma ^2}$ variance. Considering the complex Gaussian white noise, {the above three distributions} still hold, but the value of {$\sigma$ changes}. 
Given a very small probability $P$, there exist parameters ${\beta _1},{\beta _2}$ such that $\Pr \left\{ {\alpha  > \frac{{\sqrt {\frac{\pi }{2}} \sigma  + {\beta _1}\sqrt {\frac{{4 - \pi }}{2}} \sigma }}{{N\sqrt {\frac{\pi }{2}} \sigma  - {\beta _2}N\sqrt {\frac{{4 - \pi }}{2}} \sigma }}} \right\} = P$. Thus, we can set $\mathcal{T} = \frac{1}{N} \cdot \frac{{\sqrt {\frac{\pi }{2}}  + {\beta _1}\sqrt {\frac{{4 - \pi }}{2}} }}{{\sqrt {\frac{\pi }{2}}  - {\beta _2}\sqrt {\frac{{4 - \pi }}{2}} }}$, where $\frac{{\sqrt {\frac{\pi }{2}}  + {\beta _1}\sqrt {\frac{{4 - \pi }}{2}} }}{{\sqrt {\frac{\pi }{2}}  - {\beta _2}\sqrt {\frac{{4 - \pi }}{2}} }}$ can be obtained experimentally. This value is only related to ${\beta _1},{\beta _2}$, i.e., the given probability $P$. Hence, the threshold is applicable to different signal-to-noise ratios (SNRs) and OTFS sizes.
\begin{figure}[h]
    \centering
    \subfigure[Designed TO estimation scheme]{
    \label{fig:snapshPot_subfig:joint}
    \includegraphics[width=0.4\textwidth]{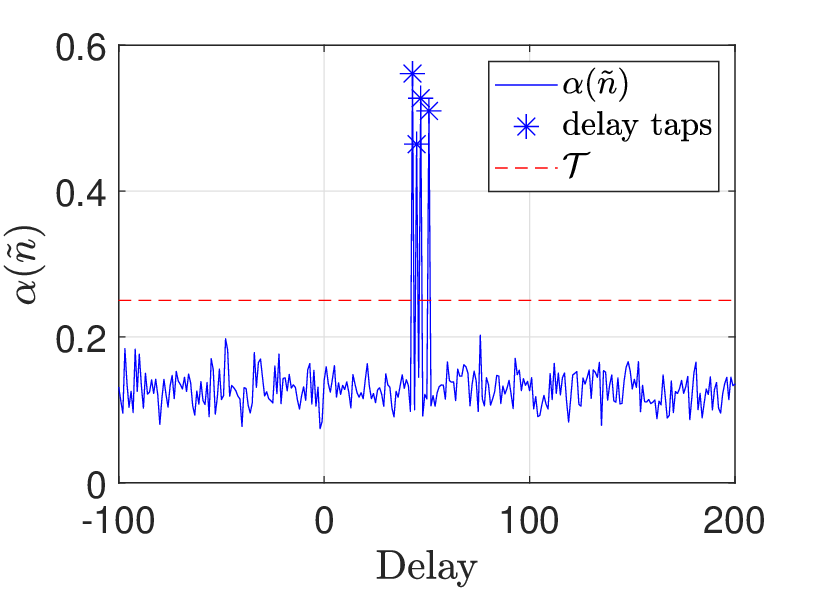}
    }
    \subfigure[TO estimation method in \cite{bayat2022time}]{
    \label{fig:snapshot_subfig:bayat}
    \includegraphics[width=0.4\textwidth]{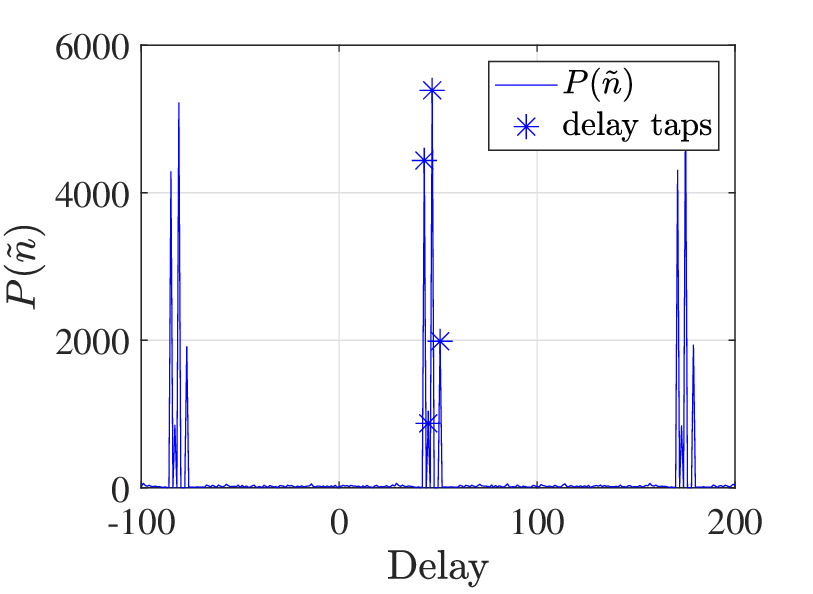}
    }
    \caption{A snapshot of $\alpha \left( {\tilde n} \right)$ and $P \left( {\tilde n} \right)$ for the designed scheme and the method in \cite{bayat2022time} at ${\rm{SN}}{{\rm{R}}_{\rm{d}}} = 10{\rm{\ dB}}$ with $M = 128$ and $N = 32$.}
    \label{fig:snapshot}
\end{figure}

Fig. \ref{fig:snapshot} depicts a snapshot of the timing metric $\alpha \left( {\tilde n} \right)$ and correlation function $P \left( {\tilde n} \right)$, used for TO estimation in the proposed algorithm and \cite{bayat2022time} respectively. The parameters for OTFS systems are $M = 128$, $N = 32$, data SNR ${\rm{SN}}{{\rm{R}}_{\rm{d}}} = 10{\rm{\ dB}}$, MLS SNR ${\rm{SN}}{{\rm{R}}_{\rm{M}}} = 30{\rm{\ dB}}$, pilot SNR \cite{bayat2022time} ${\rm{SN}}{{\rm{R}}_{\rm{p}}} = 44.9{\rm{\ dB}}$. The detection threshold $\mathcal{T}=8.0/N$ is set reasonably {and helps to effectively capture the delay taps, as seen in Fig. \ref{fig:snapshPot_subfig:joint}.}
%, the threshold is , } 
{We can also observe from Figs. \ref{fig:snapshPot_subfig:joint} and \ref{fig:snapshot_subfig:bayat} that
%Compared to Fig. \ref{fig:snapshPot_subfig:joint} and \ref{fig:snapshot_subfig:bayat}, in our proposed method, 
the obtained $\alpha \left( {\tilde n} \right)$ by the proposed algorithm does not exhibit a peak at $\theta  = M{\theta _{\rm{t}}}$. As a result, the proposed algorithm does not require} additional TO estimation. Further, as observed in Fig. \ref{fig:snapshot_subfig:bayat}, the method in \cite{bayat2022time} suffers from TO estimation errors due to the first path of the channel not being the strongest path. In contrast, the proposed algorithm accurately estimates TO by capturing all paths of the channel.

\subsection{Doppler \& Channel Gain Estimation}
%\textcolor{cyan}{After obtaining the set of channel delay taps $\mathcal{\hat L}$,  we estimate the Doppler taps and gains for paths in $\mathcal{\hat L}$. For $\forall {\hat l_i} \in \mathcal{\hat L}$, let ${\hat n_i} = \hat \theta  + {L_{{\rm{RCP}}}} + {l_{{\rm{MLS}}}} + {\hat l_i}$, we have $q_{\hat{n}_i}\left[n\right]=\begin{cases}\frac{P_{\mathrm{MLS}}}{N-1} h_i\cdot\mathrm{e}^{\mathrm{j}\frac{2\pi\dot{k}_in}{N}},&n=0,1,\cdots,N-2,\\0,&n=N-1\end{cases}$.} 
{After obtaining the TO $\theta$ and channel delay taps,  we then estimate the Doppler taps and channel gains. For any path in the channel with delay ${l_i}$, let ${\tilde n_i} = \theta  + {L_{{\rm{RCP}}}} + {l_{{\rm{MLS}}}} + {l_i}$, we have
$q_{\tilde n_i}\left[n\right]=\begin{cases}\frac{P_{\mathrm{MLS}}}{N-1} h_i\cdot\mathrm{e}^{\mathrm{j}\frac{2\pi k_in}{N}},&n=0,1,\cdots,N-2,\\0,&n=N-1\end{cases}$.
}
To estimate the Doppler tap ${k_i}$, a direct approach is to estimate it based on the location of the peak in  ${Q_{{{\tilde n}_i}}}\left[ k \right]$, which is feasible when ${k_i}$ is an integer. However, when ${k_i}$ is fractional, spectral leakage will greatly affect accurate estimation of ${k_i}$. Given that ${q_{{{\tilde n}_i}}}\left[ n \right]$ is a complex exponential sequence, this letter estimates ${k_i}$ using the following expression: ${\hat k_i} = \frac{N}{{\left( {N - 2} \right) \cdot 2{\rm{\pi }}}}\sum\limits_{n = 0}^{N - 3} {\angle \left( {{q_{{{\tilde n}_i}}}\left[ {n + 1} \right]q_{{{\tilde n}_i}}^ * \left[ n \right]} \right)} $, where $\angle$ denotes the phase angle operation. After obtaining ${\hat k_i}$, the estimated gain ${h_i}$ can be computed by the following close-formed expression: ${\hat h_i} = \frac{1}{{{P_{{\rm{MLS}}}}}}\sum\limits_{n = 0}^{N - 2} {{q_{{{\tilde n}_i}}}\left[ n \right] \cdot {{\rm{e}}^{{\rm{ - j}}\frac{{2{\rm{\pi }}{{\hat k}_i}n}}{N}}}} $. 

Finally, on the basis of the above derivations, we can summarize the main steps of time synchronization and channel estimation in Algorithm \ref{alg:joint}.
\begin{algorithm}[!htp]
    \caption{Joint time synchronization and channel estimation, JTSCE}
    \label{alg:joint}
    \begin{algorithmic}[1]
        \STATE {\textbf{Input:}} %\textcolor{cyan}{${q_{\tilde n}}\left[ n \right]$}
        %\textcolor{magenta}
        {received signal $r\left[n'\right]$, local MLS ${\tilde x_{{\rm{MLS}}}}\left[ n \right]$}, %detection
        threshold $\mathcal{T}$, ${l_{{\rm{MLS}}}}$, delay spread $L$, and RCP length ${L_{{\rm{RCP}}}}$.
        \STATE {\textbf{Output:}} timing offset $\hat \theta $ and channel parameters set ${\hat {\mathcal H}}$.
        \STATE {\textbf{Initialize:}} ${\hat {\mathcal H}} = \emptyset $, $\tilde n = 0$, ${\rm{stop}} = \infty $, and ${\rm{trigger}} = {\rm{True}}$.
        \WHILE{$\tilde n < {\rm{stop}}$}
        \STATE Let
        {${Y_{\tilde n}}\left[ n \right] = r\left[ {\tilde n + nM} \right]$, ${q_{\tilde n}}\left[ n \right] = {Y_{\tilde n}}\left[ n \right] {\tilde x_{{\rm{MLS}}}}\left[ n \right]$, and} ${Q_{\tilde n}}\left[ k \right] = {\rm{DFT}}\left\{ {{q_{\tilde n}}\left[ n \right]} \right\}$.
        \STATE Calculate the timing metric 
        $\alpha \left( {\tilde n} \right)$, defined as  
        $\alpha \left( {\tilde n} \right) = \frac{{\mathop {\max }\limits_{k = 0,1, \cdots N - 1} \left\{ {\left| {{Q_{\tilde n}}\left[ k \right]} \right|} \right\}}}{{\sum\nolimits_{k = 0}^{N - 1} {\left| {{Q_{\tilde n}}\left[ k \right]} \right|} }}$.
        \IF{$\alpha \left( {\tilde n} \right) > \mathcal{T}$}
        \IF{${\rm{trigger}} = {\rm{True}}$}
        \STATE ${\rm{stop}} = \tilde n + L + 1$, ${\rm{trigger}} = {\rm{False}}$, and $\hat \theta  = \tilde n - {l_{{\rm{MLS}}}} - {L_{{\rm{RCP}}}}$.
        \ENDIF
        \STATE Get $\left( {\hat l,\hat k,\hat h} \right)$ through $\hat l = \tilde n - \hat \theta  - {L_{{\rm{RCP}}}} - {l_{{\rm{MLS}}}}$, $\hat k = \frac{N}{{\left( {N - 2} \right) \cdot 2\pi }}\sum\limits_{n = 0}^{N - 3} {\angle \left( {{q_{\tilde n}}\left[ {n + 1} \right]q_{\tilde n}^*\left[ n \right]} \right)}$, and $\hat h = \frac{1}{{{P_{{\rm{MLS}}}}}}\sum\limits_{n = 0}^{N - 2} {{q_{\tilde n}}\left[ n \right] \cdot {{\rm{e}}^{{\rm{ - j}}\frac{{2\pi \hat kn}}{N}}}} $. Then append $\left( {\hat l,\hat k,\hat h} \right)$ to ${\hat {\mathcal H}}$.
        \ENDIF
        \STATE $\tilde n = \tilde n + 1$.
        \ENDWHILE
    \end{algorithmic}
\end{algorithm}

\section{Simulation results}
In this section, we analyze the performance of the proposed algorithm through simulations. The default system parameters are set as follows: carrier frequency 8 GHz, subcarrier spacing 15 kHz, $M = 128$, $N = 32$, RCP length ${L_{{\rm{RCP}}}} = M/4$, detection threshold $\mathcal{T}=8.0/N$, and 4-QAM signals. We refer to \cite{raviteja2019embedded} for the definitions of average data SNR $\begin{aligned}{\rm{SN}}{{\rm{R}}_{\rm{d}}} = {\rm{E}}\left\{ {{{\left| {{x_{\rm{d}}}} \right|}^2}} \right\}{\rm{/}}{\sigma ^2}\end{aligned}$ and average MLS SNR $\begin{aligned}{\rm{SN}}{{\rm{R}}_{\rm{M}}} = {\rm{E}}\left\{ {{{\left| {{x_{{\rm{MLS}}}}} \right|}^2}} \right\}{\rm{/}}{\sigma ^2}\end{aligned}$. 
The channel model in \cite{yuan2020simple} is utilized. The maximum speed of a mobile user is set to $v = {\rm{250\ Kmph}}$, corresponding to a maximum Doppler shift index ${k_{{\nu _{\max }}}} = 4$ , and we assume a maximum delay index ${l_{{\tau _{\max }}}} = 10$.

In Fig. \ref{fig:to_delay_acc}, we investigate the performance of the designed TO and delay estimation {scheme}. Specifically, under the condition of ${\rm{SN}}{{\rm{R}}_{\rm{d}}} \in \{10,15,20\}{\rm{\ dB}}$, we conduct simulations for each ${\rm{SN}}{{\rm{R}}_{\rm{M}}}$, measuring the performance of the estimation {scheme} by the proportion of accurately estimated TO and channel delay taps. In Fig. \ref{fig:to_delay_acc}, the dashed line represents the proportion of accurately estimated TO, while the solid line represents the proportion of accurately estimated TO and delay taps. The results show that the estimation accuracy is independent of ${\rm{SN}}{{\rm{R}}_{\rm{d}}}$ and positively correlated with ${\rm{SN}}{{\rm{R}}_{\rm{M}}}$. This is because the added guard symbols ensure that the data symbols and MLS do not interfere with each other. Further, when ${\rm{SN}}{{\rm{R}}_{\rm{M}}} \geq 25{\rm{\ dB}}$, the designed TO and delay estimation {scheme} achieves accurate TO estimation. Although there are some errors in the delay estimation, it also reaches a high level of accuracy. 
The sources of error in the delay estimation are twofold: 1) Missed detection: paths with smaller channel gains have correlation peaks below the threshold and are missed. 2) False detection: spurious peaks above the threshold are falsely identified as paths. Missed detection decreases with increasing ${\rm{SN}}{{\rm{R}}_{\rm{M}}}$, while false detection is independent of ${\rm{SN}}{{\rm{R}}_{\rm{M}}}$. Therefore, the high accuracy at ${\rm{SN}}{{\rm{R}}_{\rm{M}}} = 40{\rm{\ dB}}$ indicates that missed detection is the main source of delay estimation error. Paths with smaller channel gains have a relatively minor impact on the accuracy of channel estimation. Thus, the missed detection of these paths is acceptable for the decision process to some extent. This will be verified in the subsequent BER simulations.
\begin{figure}[h]
    \centering
    \includegraphics[width=0.4\textwidth]{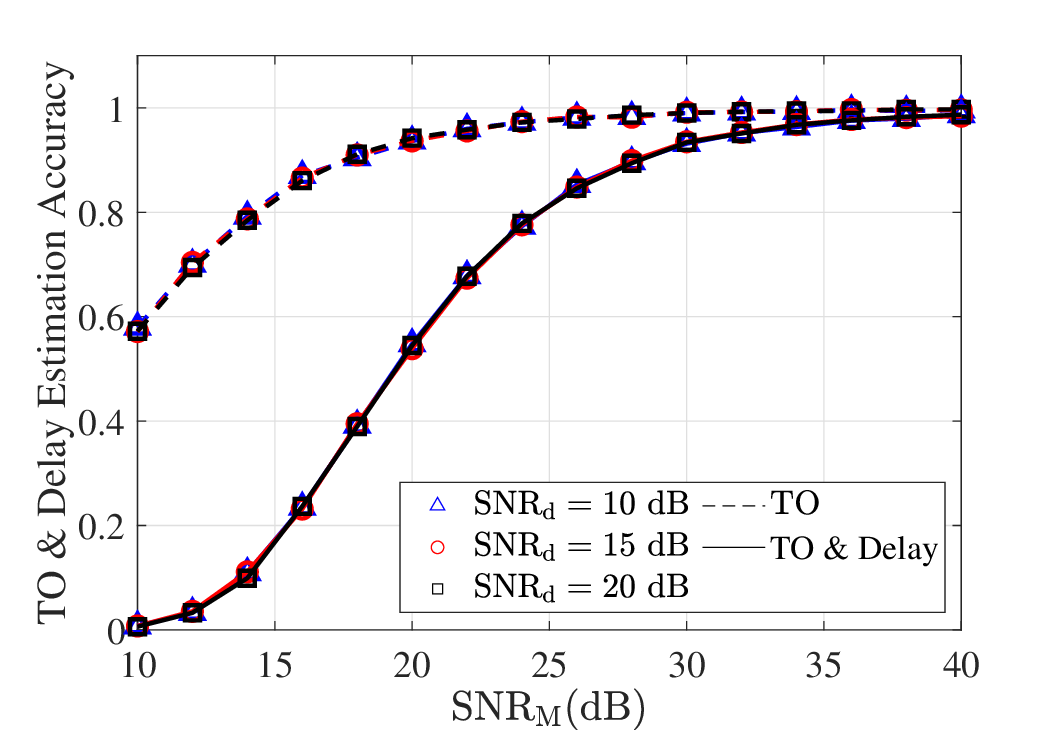}
    \caption{The accuracy of TO and delay estimation of the proposed algorithm.}
    \label{fig:to_delay_acc}
\end{figure}
 
Next, we examine the impact of threshold $\mathcal{T}$ on the estimation of TO and delay. Fig. \ref{fig:threshold} illustrates the relationship between estimation accuracy and threshold $\mathcal{T}$ under conditions of ${\rm{SN}}{{\rm{R}}_{\rm{M}}} = 25{\rm{\ dB}}$ and $35{\rm{\ dB}}$, respectively. The previous simulation results indicate that the estimation performance is independent of ${\rm{SN}}{{\rm{R}}_{\rm{d}}}$. The following simulations are 
%\textcolor{blue}{uniformly} 
conducted under the assumption of ${\rm{SN}}{{\rm{R}}_{\rm{d}}} = 20{\rm{\ dB}}$. 
The simulation results shown in Fig. \ref{fig:threshold} reveal that $\mathcal{T} = 8.0/N$ has a significant advantage in low MLS power scenarios and maintains good accuracy under high MLS power conditions. This is because missed detection is more likely {to happen} at low MLS power levels, where a relatively lower threshold is more beneficial. In summary, for the given system parameters, the optimal threshold approximates $\mathcal{T} = 8.0/N$.
\begin{figure}[h]
    \centering
    \subfigure[${\rm{SN}}{{\rm{R}}_{\rm{M}}} = 25{\rm{\ dB}}$]{
    \label{fig:subfig:thr25}
    \includegraphics[width=0.4\textwidth]{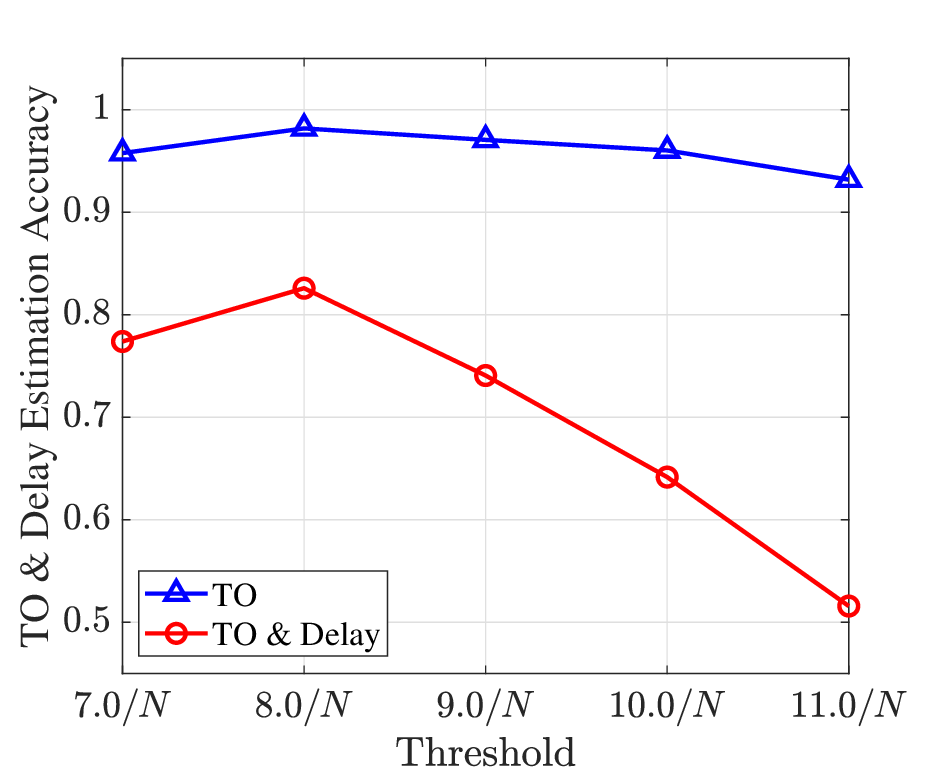}
    }
    \subfigure[${\rm{SN}}{{\rm{R}}_{\rm{M}}} = 35{\rm{\ dB}}$]{
    \label{fig:subfig:thr35}
    \includegraphics[width=0.4\textwidth]{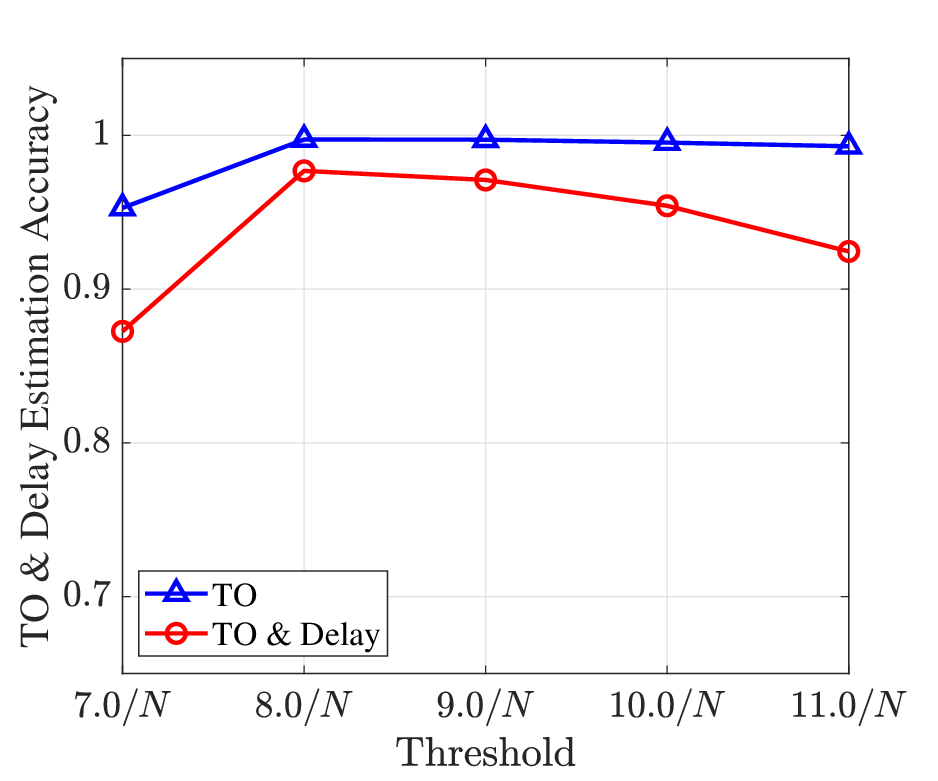}
    }
    \caption{The accuracy of TO and delay estimation versus thresholds at ${\rm{SN}}{{\rm{R}}_{\rm{M}}} = 25{\rm{\ dB}}$ and $35{\rm{\ dB}}$.}
    \label{fig:threshold}
\end{figure}

In Fig. \ref{fig:Doppler_gain_MSE}, we evaluate the performance of the designed Doppler and channel gain estimation {scheme}. To focus solely on the accuracy of Doppler and channel gain estimation, we assume perfect acquisition of TO and delay information. We assess the mean square error (MSE) performance of the designed Doppler and channel gain estimation {scheme} and discuss the trend of MSE versus the value of ${\rm{SN}}{{\rm{R}}_{\rm{M}}}$. %where the vertical axis represents the MSE of the Doppler and channel gain estimation. 
It can be observed that, similar to delay estimation, as ${\rm{SN}}{{\rm{R}}_{\rm{M}}}$ increases, the Doppler and channel gain estimation errors decrease.
%, leading to higher estimation accuracy.
\begin{figure}[h]
    \centering
    \includegraphics[width=0.4\textwidth]{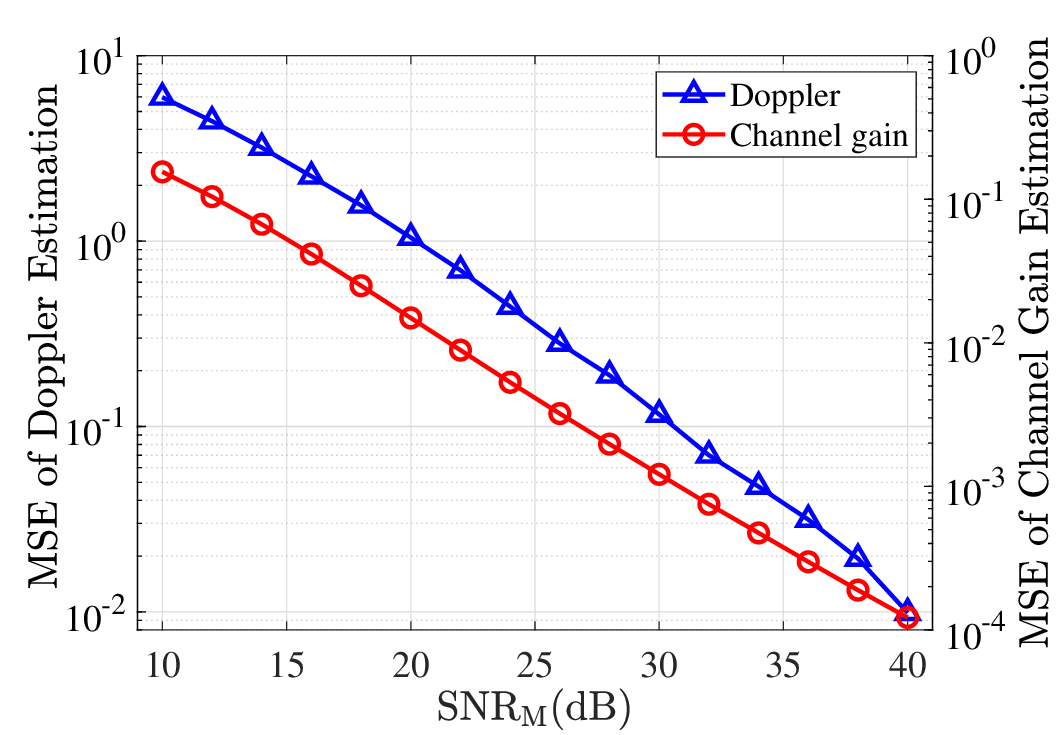}
    \caption{MSE of Doppler and channel gain estimation for the proposed algorithm.}
    \label{fig:Doppler_gain_MSE}
\end{figure}
\begin{figure}[h]
    \centering
    \includegraphics[width=0.4\textwidth]{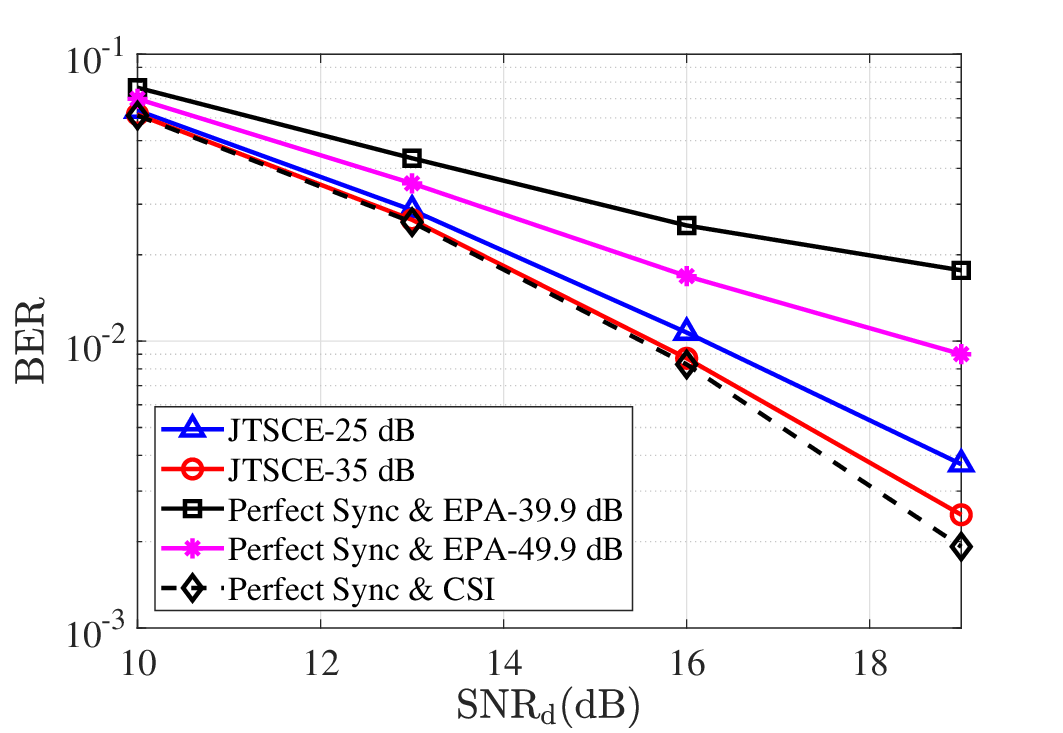}
    \caption{Comparison of BER performance of JTSCE, EPA with perfect time synchronization and perfect time synchronization \& CSI.}
    \label{fig:ber}
\end{figure}

Finally, in Fig. \ref{fig:ber}, we compare the proposed JTSCE with EPA to evaluate its BER performance. For signal detection, we employ the linear minimum mean-square error (LMMSE) algorithm {\cite{hong2022delay}}. 
To ensure accurate estimation of TO and delays, we set ${\rm{SN}}{{\rm{R}}_{\rm{M}}}$ to 25 dB and 35 dB based on the conclusions drawn from Fig. \ref{fig:to_delay_acc}. For a fair comparison, we ensure that the total power of the embedded sequences in both algorithms is the same. That is, set the pilot power {of EPA to be ${\log _{10}}\left( {N - 1} \right){\rm{\ dB}}$ greater than that of JTSCE.} 
The BER curve with perfect synchronization and CSI is also plotted. 
%We then \textcolor{blue}{compare these results with the scenario} where perfect synchronization and channel state information (CSI) are available. 
Simulation results show that as the power of pilot increases, the BER performance of both algorithms improves, which aligns with the improved accuracy of Doppler and gain estimation observed in Fig. \ref{fig:Doppler_gain_MSE}. 
Additionally, when allocating the same pilot power, JTSCE achieves better BER performance than EPA. %\textcolor{blue}{When MLS SNR reaches 40 dB, the {achieved BER} of JTSCE is close to that of the perfect CSI scenario.}
%\textcolor{magenta}
{When ${\rm{SN}}{{\rm{R}}_{\rm{M}}}$ reaches 35 dB, the {achieved BER} of JTSCE is close to that of the case of perfect time synchronization and CSI.}

\section{Conclusion}
In this letter, we proposed a joint time synchronization and channel estimation algorithm for OTFS. In this algorithm, we leveraged MLS as pilots to accomplish time synchronization and channel estimation. Initially, we utilized the good autocorrelation properties of the MLS to estimate TO and delay.
%using relative correlation peaks. 
This approach had the advantage of not requiring additional time-dimension TO estimation and maintaining TO estimation accuracy when the first path in the channel was not the strongest.
After determining the starting position of OTFS signal and the delay of each path of the channel, we estimated the Doppler and channel gain for each path. We analyzed the accuracy of the proposed algorithm's TO and delay estimation {performance} through simulations, as well as the MSE of Doppler and channel gain estimation. We also evaluated the BER performance, and the results indicated that the proposed algorithm's BER performance was close to the perfect time synchronization and CSI scenario. Moreover, with the same pilot power, it outperformed the embedded pilot-aided counterpart.

%\iffalse
\appendix[Proof of \textbf{Lemma} \ref{lemma:1}]
When ${\theta _{\rm{t}}} \ne 0$ and ${\theta _{\rm{d}}} = 0$, ${Y_{\tilde n}}\left[ n \right]$ can be expressed as $Y_{\tilde{n}}\left[n\right]=\begin{cases}y_{\mathrm{TO}}\left[n\right],&n\leq\theta_{\mathrm{t}},\\y_{\mathrm{MLS}}^{(l_i)}\left[n-\theta_{\mathrm{t}}\right],&n>\theta_{\mathrm{t}}\end{cases}$, where ${y_{{\rm{TO}}}}\left[ n \right]$ represents the interference caused by ${\theta _{\rm{t}}}$. ${y_{{\rm{TO}}}}\left[ n \right]$ does not correlate with ${Y_{\tilde n}}\left[ n \right]$. Thus, we consider the case of ${\theta _{\rm{t}}} \ll N$.

The MLS ${x_{{\rm{MLS}}}}\left[ n \right]$ has the property $\sum\nolimits_{n = 0}^{N - 2} {{x_{{\rm{MLS}}}}\left[ n \right]{x_{{\rm{MLS}}}}\left[ {{{\left( {n - {\theta _{\rm{t}}}} \right)}_{N - 1}}} \right]}  =  - \frac{{{P_{{\rm{MLS}}}}}}{{N - 1}}$, for ${\theta _{\rm{t}}} \ne 0$. 
Therefore, we have 
\begin{equation}\label{eq:MLS_property}
\begin{array}{l}
\left| {\sum\nolimits_{n = {\theta _{\rm{t}}}}^{N - 2} {{x_{{\rm{MLS}}}}\left[ n \right]{x_{{\rm{MLS}}}}\left[ {n - {\theta _{\rm{t}}}} \right]} } \right|\\
 = \left| { - \frac{{{P_{{\rm{MLS}}}}}}{{N - 1}} - \sum\nolimits_{n = 0}^{{\theta _{\rm{t}}} - 1} {{x_{{\rm{MLS}}}}\left[ n \right]{x_{{\rm{MLS}}}}\left[ {{{\left( {n - {\theta _{\rm{t}}}} \right)}_{N - 1}}} \right]} } \right|\\
 \le \frac{{\left( {1 + {\theta _{\rm{t}}}} \right){P_{{\rm{MLS}}}}}}{{N - 1}}.
\end{array}
\end{equation}

Performing DFT on
\begin{equation}
\begin{aligned}q_{\tilde{n}}\left[n\right]&=y_{\mathrm{MLS}}^{(l_{i})}\left[n\right]\cdot\tilde{x}_{\mathrm{MLS}}\left[n\right]\\&=\begin{cases}y_{\mathrm{TO}}\left[n\right]\cdot\tilde{x}_{\mathrm{MLS}}\left[n\right]&n\leq\theta_{\mathrm{t}}\\\\h_{i}\cdot\tilde{x}_{\mathrm{MLS}}\left[n\right]\tilde{x}_{\mathrm{MLS}}\left[n-\theta_{\mathrm{t}}\right]\cdot\mathrm{e}^{\mathrm{j}\frac{2\pi k_{i}\left(n-\theta_{\mathrm{t}}\right)}{N}}&n>\theta_{\mathrm{t}}\end{cases}\end{aligned},
\end{equation}
we get 
\begin{equation}
\begin{array}{l}
{Q_{\tilde n}}\left[ k \right] = \frac{1}{{\sqrt N }}\sum\nolimits_{n = 0}^{{\theta _{\rm{t}}} - 1} {{y_{{\rm{TO}}}}\left[ n \right] \cdot {{\tilde x}_{{\rm{MLS}}}}\left[ n \right]{{\rm{e}}^{{\rm{ - j}}\frac{{2{\rm{\pi }}nk}}{N}}}} \\
 + \frac{1}{{\sqrt N }}\sum\nolimits_{n = {\theta _{\rm{t}}}}^{N - 2} {{h_i} \cdot {x_{{\rm{MLS}}}}\left[ n \right]{x_{{\rm{MLS}}}}\left[ {n - {\theta _{\rm{t}}}} \right] \cdot {{\rm{e}}^{{\rm{j}}\frac{{2{\rm{\pi }}{k_i}\left( {n - {\theta _{\rm{t}}}} \right)}}{N}}}{{\rm{e}}^{{\rm{ - j}}\frac{{2{\rm{\pi }}nk}}{N}}}} \\
 \approx \frac{1}{{\sqrt N }}\sum\nolimits_{n = {\theta _{\rm{t}}}}^{N - 2} {{h_i} \cdot {x_{{\rm{MLS}}}}\left[ n \right]{x_{{\rm{MLS}}}}\left[ {n - {\theta _{\rm{t}}}} \right] \cdot {{\rm{e}}^{{\rm{j}}\frac{{2{\rm{\pi }}{k_i}\left( {n - {\theta _{\rm{t}}}} \right)}}{N}}}{{\rm{e}}^{{\rm{ - j}}\frac{{2{\rm{\pi }}nk}}{N}}}} 
\end{array}.
\end{equation}

When $k = {k_i}$, according to (\ref{eq:MLS_property}), we obtain 
\begin{equation}
\begin{array}{c}
\left| {{Q_{\tilde n}}\left[ {{k_i}} \right]} \right| = \frac{{\left| {{h_i}} \right|}}{{\sqrt N }}\left| {\sum\nolimits_{n = {\theta _{\rm{t}}}}^{N - 2} {{x_{{\rm{MLS}}}}\left[ n \right]{x_{{\rm{MLS}}}}\left[ {n - {\theta _{\rm{t}}}} \right]} } \right|\\
 \le \frac{{\left| {{h_i}} \right|}}{{\sqrt N }}\frac{{\left( {1 + {\theta _{\rm{t}}}} \right){P_{{\rm{MLS}}}}}}{{N - 1}} \ll \frac{{\left| {{h_i}} \right|}}{{\sqrt N }}{P_{{\rm{MLS}}}}
\end{array}.
\end{equation}

Thus, when ${\theta _{\rm{t}}} \ne 0$ and ${\theta _{\rm{d}}} = 0$, $\left| {{Q_{\tilde n}}\left[ k \right]} \right|$ does not exhibit a significant peak.

This completes the proof. 
\ifCLASSOPTIONcaptionsoff
  \newpage
\fi

% trigger a \newpage just before the given reference
% number - used to balance the columns on the last page
% adjust value as needed - may need to be readjusted if
% the document is modified later
%\IEEEtriggeratref{8}
% The "triggered" command can be changed if desired:
%\IEEEtriggercmd{\enlargethispage{-5in}}

% references section

% can use a bibliography generated by BibTeX as a .bbl file
% BibTeX documentation can be easily obtained at:
% http://www.ctan.org/tex-archive/biblio/bibtex/contrib/doc/
% The IEEEtran BibTeX style support page is at:
% http://www.michaelshell.org/tex/ieeetran/bibtex/
\bibliographystyle{IEEEtran}
% argument is your BibTeX string definitions and bibliography database(s)
%\bibliography{IEEEabrv,../bib/paper}
%
% <OR> manually copy in the resultant .bbl file
% set second argument of \begin to the number of references
% (used to reserve space for the reference number labels box)
\bibliography{UAV_Channel_Gain_WCL}

% Generated by IEEEtran.bst, version: 1.14 (2015/08/26)
\begin{thebibliography}{10}
\providecommand{\url}[1]{#1}
\csname url@samestyle\endcsname
\providecommand{\newblock}{\relax}
\providecommand{\bibinfo}[2]{#2}
\providecommand{\BIBentrySTDinterwordspacing}{\spaceskip=0pt\relax}
\providecommand{\BIBentryALTinterwordstretchfactor}{4}
\providecommand{\BIBentryALTinterwordspacing}{\spaceskip=\fontdimen2\font plus
\BIBentryALTinterwordstretchfactor\fontdimen3\font minus
  \fontdimen4\font\relax}
\providecommand{\BIBforeignlanguage}[2]{{%
\expandafter\ifx\csname l@#1\endcsname\relax
\typeout{** WARNING: IEEEtran.bst: No hyphenation pattern has been}%
\typeout{** loaded for the language `#1'. Using the pattern for}%
\typeout{** the default language instead.}%
\else
\language=\csname l@#1\endcsname
\fi
#2}}
\providecommand{\BIBdecl}{\relax}
\BIBdecl

\bibitem{DBLP:journals/wcl/AbidTK24}
M.~H. Abid, I.~A. Talin, and M.~I. Kadir, ``Wavelet-aided {OTFS} for
  {RIS}-assisted high-mobility wireless channels,'' \emph{{IEEE} Wireless
  Communications Letter}, vol.~13, no.~6, pp. 1611--1615, 2024.

\bibitem{Singh2024AutoencoderBE}
A.~Singh, S.~Sharma, M.~Sharma, K.~Deka, and D.~B. da~Costa, ``Autoencoder
  based end-to-end {OTFS} system design with hardware impairments,'' \emph{IEEE
  Wireless Communications Letters}, 2024, in Press. DOI
  10.1109/LWC.2024.3412240.

\bibitem{raviteja2019embedded}
P.~Raviteja, K.~T. Phan, and Y.~Hong, ``Embedded pilot-aided channel estimation
  for {OTFS} in delay--doppler channels,'' \emph{IEEE transactions on vehicular
  technology}, vol.~68, no.~5, pp. 4906--4917, 2019.

\bibitem{yuan2021data}
W.~Yuan, S.~Li, Z.~Wei, J.~Yuan, and D.~W.~K. Ng, ``Data-aided channel
  estimation for {OTFS} systems with a superimposed pilot and data transmission
  scheme,'' \emph{IEEE wireless communications letters}, vol.~10, no.~9, pp.
  1954--1958, 2021.

\bibitem{hashimoto2021channel}
N.~Hashimoto, N.~Osawa, K.~Yamazaki, and S.~Ibi, ``{Channel estimation and
  equalization for CP-OFDM-based OTFS in fractional doppler channels},'' in
  \emph{2021 IEEE International Conference on Communications Workshops (ICC
  Workshops)}.\hskip 1em plus 0.5em minus 0.4em\relax IEEE, 2021, pp. 1--7.

\bibitem{wei2022off}
Z.~Wei, W.~Yuan, S.~Li, J.~Yuan, and D.~W.~K. Ng, ``Off-grid channel estimation
  with sparse bayesian learning for {OTFS} systems,'' \emph{IEEE Transactions
  on Wireless Communications}, vol.~21, no.~9, pp. 7407--7426, 2022.

\bibitem{bayat2022time}
M.~Bayat and A.~Farhang, ``Time and frequency synchronization for {OTFS},''
  \emph{IEEE Wireless Communications Letters}, vol.~11, no.~12, pp. 2670--2674,
  2022.

\bibitem{li2023downlink}
S.~Li, M.~Zhang, C.~Ju, D.~Wang, W.~Chen, P.~Zhou, and D.~Wang, ``Downlink
  carrier frequency offset estimation for {OTFS}-based {LEO} satellite
  communication system,'' \emph{IEEE Communications Letters}, 2023.

\bibitem{hong2022delay}
Y.~Hong, T.~Thaj, and E.~Viterbo, \emph{Delay-Doppler Communications:
  Principles and Applications}.\hskip 1em plus 0.5em minus 0.4em\relax Academic
  Press, 2022.

\bibitem{yuan2020simple}
W.~Yuan, Z.~Wei, J.~Yuan, and D.~W.~K. Ng, ``A simple variational {Bayes}
  detector for orthogonal time frequency space ({OTFS}) modulation,''
  \emph{IEEE transactions on vehicular technology}, vol.~69, no.~7, pp.
  7976--7980, 2020.

\end{thebibliography}

\end{document}